\newtheorem{theorem}{Theorem}
\acrodef{BNC}{bacterial nanocellulose}
\acrodef{DDS}{drug delivery system}
\acrodef{MC}{molecular communication}
\acrodef{PBS}{particle-based simulation}
\acrodef{wlog}[w.l.o.g.]{without loss of generality}
\newcommand{\Dc}{D_\mathrm{c}}
\long\def\@makecaption#1#2{\ifx\@captype\@IEEEtablestring%
    \footnotesize\begin{center}{\normalfont\footnotesize #1}\\
        {\normalfont\footnotesize\scshape #2}\end{center}%
    \@IEEEtablecaptionsepspace
    \else
    \@IEEEfigurecaptionsepspace
    \setbox\@tempboxa\hbox{\normalfont\footnotesize {#1.}~~ #2}%
    \ifdim \wd\@tempboxa >\hsize%
    \setbox\@tempboxa\hbox{\normalfont\footnotesize {#1.}~~ }%
    \parbox[t]{\hsize}{\normalfont\footnotesize \noindent\unhbox\@tempboxa#2}%
    \else
    \hbox to\hsize{\normalfont\footnotesize\hfil\box\@tempboxa\hfil}\fi\fi}
\begin{document}
\bstctlcite{IEEEexample:BSTcontrol}

\title{Molecular Communication for Gastroretentive Drug Delivery
}
\author{Sebastian~Lotter,
        Marco~Seiter,
        Maryam~Pirmoradi,
        Lukas~Brand,
        Dagmar~Fischer,\\
        Robert~Schober
\thanks{Sebastian Lotter, Maryam Pirmoradi, Lukas Brand, and Robert Schober are with the Institute for Digital Communications, and Marco Seiter and Dagmar Fischer are with the Division of Pharmaceutical Technology and Biopharmacy, both at Friedrich-Alexander-Universität Erlangen-Nürnberg, Erlangen, Germany.}
\thanks{This work was funded in part by the Deutsche Forschungsgemeinschaft (DFG, German Research Foundation) – GRK 2950 – Project-ID 509922606.}
}

\maketitle
\begin{abstract}
Recently, \ac{BNC}, a biological material produced by non-pathogenic bacteria that possesses excellent material properties for various medical applications, has received increased interest as a carrier system for drug delivery.
However, the vast majority of existing studies on drug release from \ac{BNC} are feasibility studies with modeling and design aspects remaining largely unexplored.
To narrow this research gap, this paper proposes a novel model for the drug release from \ac{BNC}.
Specifically, the drug delivery system considered in this paper consists of a \ac{BNC} fleece coated with a polymer.
The polymer coating is used as an additional diffusion barrier, enabling the controlled release of an active pharmaceutical ingredient.
The proposed physics-based model reflects the geometry of the \ac{BNC} and incorporates the impact of the polymer coating on the drug release.
Hence, it can be useful for designing \ac{BNC}-based drug delivery systems in the future.
The accuracy of the model is validated with experimental data obtained in wet lab experiments.
\end{abstract}
\acresetall
\section{Introduction}\label{sec:introduction}
When medical drugs are administered to patients, it is of utmost importance that these drugs act as specifically as possible on their respective targets while causing the least possible amount of side effects on the healthy parts of the body.
The efficacy of a medication hereby depends on the ratio of drugs that reach the specific target site and on how well the drug concentration at the target site matches the therapeutic requirements.
The research field of {\em controlled release} targets the latter aspect by investigating methods to control the drug release rate from a drug carrier system, such as a tablet, a nanocapsule, or a hydrogel, so that eventually the drug is supplied to the target site at the therapeutically optimal rate \cite{adepu_controlled_2021}.

Drug delivery and specifically the controlled release of drugs from different types of drug carriers has been a research focus of \ac{MC} for many years \cite{chude-okonkwo_molecular_2017}.
Aspects of drug delivery being researched in the context of \ac{MC} include mathematical modeling of the drug transport in the human vascular system \cite{chahibi_molecular_2013,chen_modeling_2017,chude-okonkwo_information-theoretic_2020}, theoretical models for optimizing the local release rate of drugs at a target site \cite{zhao_release_2021,femminella_molecular_2015,salehi_diffusion-based_2019,zhao_adaptive_2021,sun2025bio,wang2021optimal}, and the use of extracellular vesicles for drug delivery \cite{veletic_modeling_2020,damrath_optimization_2024,rudsari_targeted_2021}.
While early studies in \ac{MC} in general and in \ac{MC}-based drug delivery in particular were mostly theoretical \cite{chude-okonkwo_molecular_2017}, experimental confirmation of the proposed theoretical models and system designs has increasingly gained importance in recent years \cite{lotter_experimental_2023}.
However, most \ac{MC} concepts for controlled release proposed in the literature lack experimental validation.
In this paper, we undertake a joint theoretical--experimental effort to narrow this research gap and investigate drug release from a promising carrier system: {\em \ac{BNC}}.

\ac{BNC} is a highly pure, biocompatible, and mechanically robust biomaterial produced by certain bacteria \cite{potzinger_bacterial_2017,iguchi_bacterial_2000}.
Its nanofibrous network structure, high water-holding capacity, and ease of functionalization have led to successful applications in wound dressings, tissue engineering, and, more recently, drug delivery.
Building on this, we explore the concept of coating \ac{BNC} fleeces with thin polymer layers. The coating functions as an additional diffusion barrier, which is stable under acidic conditions and enables extended drug release in the gastrointestinal tract, especially in the stomach.

Polymer-coated \ac{BNC} fleeces thus represent a novel and highly promising drug carrier for the treatment of gastric diseases such as ulcers and tumors.
However, while some preliminary investigations on the drug release from \ac{BNC} exist \cite{lotter_experimental_2023}, the mechanisms and kinetics of drug release from the coated \ac{BNC}-based systems considered in this paper are not yet understood and, to the best of the authors' knowledge, no corresponding mathematical model has been proposed in the literature to date.
A deeper understanding gained from mathematical modeling of these processes is crucial for optimizing the geometry of the \ac{BNC} and the material properties of both the fleece and its coating for specific therapeutic needs.

The main contributions of this work are as follows:
\begin{itemize}
    \item We develop a physics-based analytical model for drug release from polymer-coated \ac{BNC} fleeces formulated as a boundary value problem.
    \item We derive an explicit analytical solution that enables efficient computation of the release profile.
    \item We validate the proposed model using experimental data from wet-lab drug release studies.
\end{itemize}

By linking measurable coating parameters to resulting release profiles through a validated, analytical, and computationally efficient model, this work provides a practical tool for guiding the design of \ac{BNC}-based drug delivery systems.
Such a model-driven approach enables rapid prediction of release behavior for different geometries, coating properties, and drug types, thereby reducing experimental workload and facilitating the targeted development of gastroretentive systems with desired therapeutic profiles.

The remainder of this letter is organized as follows.
Section~\ref{sec:system_model} introduces the system model and the boundary value problem corresponding to the drug release.
In Section~\ref{sec:analysis}, an explicit solution to the boundary value problem is presented.
In Section~\ref{sec:evaluation}, numerical results obtained with the proposed model are compared to empirical data from wet-lab experiments.
Finally, Section~\ref{sec:conclusion} concludes the paper by summarizing the main findings and providing a brief outlook.
\section{Experimental Setup and System Model}\label{sec:system_model}
\subsection{Experimental Setup}\label{sec:experimental_setup}
All experiments were conducted in the Fischer lab. 
Cylindrical \ac{BNC} fleeces were loaded with pramipexole dihydrochloride monohydrate and freeze-dried.
Pramipexole is a drug that modulates dopaminergic signaling, i.e., chemical signaling pathways based on the neurotransmitter dopamine, in the central nervous system and is used in the therapy of Parkinson's disease and the restless legs syndrome.

The freeze-dried fleeces were coated with a polymer dispersion.
Figure~\ref{fig:coated_BNC} shows the cross section of a \ac{BNC} fleece after coating, where \ac{BNC} core (white) and polymer coating (pink) are clearly distinguishable.
The duration of the coating process was varied in the experiments to produce coatings of different thicknesses. The successful coating with shells of different thicknesses was confirmed by measuring the increase in mass after the coating process.
Finally, the drug release was measured in a shaking incubator at $37^{\circ}$ Celsius and $50$ revolutions per minute (rpm) using $0.1$ $\mathrm{mol}/\mathrm{l}$ hydrochloric acid solution (pH $1.2$) as release medium.
As a baseline, the drug release from uncoated \ac{BNC} fleeces was also measured under the same conditions. A detailed assessment of the experimental setup is provided in Section~\ref{sec:evaluation}.

\begin{figure}
    \centering
    \includegraphics[width=0.79\linewidth]{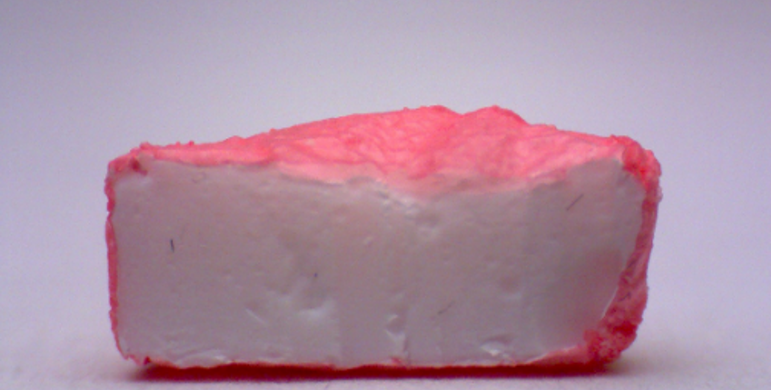}
    \caption{Drug-loaded BNC fleece (white) coated with a polymer (pink) that is resistant to degradation in the gastric environment. The image shows a cross-section of the coated fleece.}
    \label{fig:coated_BNC}
\end{figure}

\subsection{System Model}\label{sec:system_model_intro}

The geometry of the considered \ac{BNC} fleece is approximated as the cylindrical domain $\Omega = \lbrace (r,\phi,z) \,\vert\, 0 \leq r \leq R, 0 \leq \phi < 2\pi,  0 \leq z \leq Z \rbrace$, where $r$, $\phi$, and $z$ denote the radial, angular, and axial coordinate, respectively, and $R$ and $Z$, respectively, denote radius and height of the fleece.
The diffusive transport of drug molecules in $\Omega$ is governed by the following diffusion equation
\begin{align}
    \frac{\partial c(r,\phi,z,t)}{\partial t} = D \Delta c(r,\phi,z,t), \, (r,\phi,z) \in \Omega, \,t \geq 0, \label{eq:system_model:diff_eq}
\end{align}
where $c(r,\phi,z,t)$ denotes the concentration of drug molecules at time $t$ at $(r,\phi,z)$, $D$ the diffusion coefficient, and $\Delta$ the Laplacian in cylindrical coordinates, i.e.,
\begin{align}
\Delta c = \frac{1}{r}\frac{\partial}{\partial r}\left(r \frac{\partial c}{\partial r}\right) + \frac{1}{r^2}\frac{\partial^2 c}{\partial \phi^2} + \frac{\partial^2 c}{\partial z^2},
\end{align}
where here and in the following we suppress the dependency of $c(r,\phi,z,t)$ on space and time in the notation for the sake of readability wherever this does not lead to potential ambiguities.

In line with the experimental setup considered in this paper, we assume that the \ac{BNC} fleece is uniformly loaded with drug molecules, inducing the following initial condition
\begin{align}
    c(r,\phi,z,0) = c_0, \label{eq:system_model:initial_condition}
\end{align}
where $c_0$ denotes the initial drug concentration in $\Omega$ and \ac{wlog} we consider normalized drug concentration and release, i.e., $c_0 = 1/(Z R^2 \pi)$ in the following.
Furthermore, the impact of the polymer coating of the \ac{BNC} fleece on the drug release is modeled by the following boundary conditions
\begin{align}
    -D \left.\frac{\partial c}{\partial r}\,\right\vert_{r=R} = h \, c(R,\phi,z,t), \label{eq:system_model:radial_boundary}\\
    D \left.\frac{\partial c}{\partial z}\,\right\vert_{z=0} = h \, c(r,\phi,0,t), \label{eq:system_model:axial_boundary_down}\\
    -D \left.\frac{\partial c}{\partial z}\,\right\vert_{z=Z} = h \, c(r,\phi,Z,t), \label{eq:system_model:axial_boundary_up}
\end{align}
where $h = \Dc/l$ denotes the modified Sherwood number \cite{jain_theoretical_2022}, and $l$ and $\Dc$ denote the thickness of the polymer coating and the diffusion coefficient of the drug molecules in the coating, respectively.
Eqs.~\eqref{eq:system_model:radial_boundary}-\eqref{eq:system_model:axial_boundary_up} follow from the so-called {\em thin layer} approximation, which represents the impact of the relatively thin coating on the drug release as diffusive permeability $h$ \cite{jain_theoretical_2022}.
Finally, the following boundary condition completes the system model by ensuring that the drug concentration on the axis of $\Omega$ is finite
\begin{align}
    \lim_{r\to0} \,r\,c = 0.\label{eq:system_model:inner_bd_condition}
\end{align}

With \eqref{eq:system_model:initial_condition}, \eqref{eq:system_model:radial_boundary}, \eqref{eq:system_model:axial_boundary_down}, and \eqref{eq:system_model:axial_boundary_up}, angular symmetry of $c$ is implied, i.e., $\frac{\partial c}{\partial \phi} = 0$ everywhere in $\Omega$ and for all $t \geq 0$.
Hence, \eqref{eq:system_model:diff_eq} simplifies to
\begin{align}
    \frac{1}{D}\frac{\partial c}{\partial t} = \frac{1}{r}\frac{\partial}{\partial r}\left(r \frac{\partial c}{\partial r}\right) +  \frac{\partial^2 c}{\partial z^2}.\label{eq:system_model:diff_eq_simp}
\end{align}

\section{Analysis}\label{sec:analysis}
The solution to the boundary value problem \eqref{eq:system_model:diff_eq_simp},  \eqref{eq:system_model:initial_condition}-\eqref{eq:system_model:inner_bd_condition}, can be readily obtained by applying the method of separation of variables \cite{hahn_heat_2012}.
In particular, by decomposing \eqref{eq:system_model:diff_eq_simp} into eigenvalue problems in $r$, $z$, and $t$, respectively, and applying the Sturm-Liouville theory, the solution is obtained in the form of a series expansion in the respective orthogonal eigenfunctions in $r$ and $z$.
The following theorem summarizes the result.
\begin{theorem}\label{thm:concentration}
    The solution to \eqref{eq:system_model:diff_eq_simp},  \eqref{eq:system_model:initial_condition}-\eqref{eq:system_model:inner_bd_condition} is given as 
    \begin{align}
        &c(r,z,t) = \frac{1}{ZR^2 \pi}\sum_{n=1}^\infty \sum_{m=1}^\infty A_{nm} J_0(\alpha_n r) \exp(-D(\alpha_n+\beta_m)^2t)\nonumber\\ &\quad \left[\beta_m \cos(\beta_m z)+ (h/D) \sin(\beta_m z)\right],\label{eq:thm:conc}
    \end{align}
    where parameters $h$, $Z$, and $R$ have been defined in Section~\ref{sec:system_model_intro} and $\alpha_n = \gamma_n/R$ and $\gamma_n$ are the solutions of the following equation
    \begin{align}
        \gamma J_1(\gamma) = \frac{h R}{D} J_0(\gamma),\label{eq:thm:radial_eigv}
    \end{align}
    where $J_\nu(\cdot)$ denotes the Bessel function of the first kind of order $\nu$, $\beta_m$ are the positive solutions of the transcendental equation
    \begin{align}
        \tan(\beta Z) = \frac{2 h \beta}{D(\beta^2-h^2/D^2)},\label{eq:thm:axial_eigv}
    \end{align}
    and the coefficients $A_{nm}$ are obtained from \eqref{eq:system_model:initial_condition} as
    \begin{align}
        \!\!A_{nm} \!=\! \frac{\frac{R}{\alpha_n} \!J_1(\alpha_n R) \!\!\left[\frac{h}{D \beta_m}\! + \! \sin(\beta_m Z) \!- \!\frac{h}{D \beta_m}\!\cos(\beta_m Z)\right]\! }{N(\alpha_n)M(\beta_m)}\!,\!\!\label{eq:thm:coefficients}
    \end{align}
    where
    \begin{align}
        N(\alpha_n) &=\! \frac{J_0^2(\alpha_n R)}{2} \frac{R^2 \left[\left(\frac{h}{D}\right)^2+ \alpha_n^2\right]}{\alpha_n^2},\label{eq:thm:radial_norm}\\
        M(\beta_m)\! &=\! \!\frac{1}{2} \!\left[\!\left(\!\beta_m^2 \!+\! \left(\frac{h}{D}\right)^2\right)\!\!\left(\!Z \!+\! \frac{\frac{h}{D}}{\beta_m^2+\left(\frac{h}{D}\right)^2}\!\right)\!+\!\frac{h}{D}\right]\!.\!\label{eq:thm:axial_norm}
    \end{align}
\end{theorem}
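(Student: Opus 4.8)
The plan is to construct the solution by separation of variables and then fix the expansion coefficients through Sturm--Liouville orthogonality. First I would posit the product ansatz $c(r,z,t) = \mathcal{R}(r)\,\mathcal{Z}(z)\,T(t)$ and substitute it into \eqref{eq:system_model:diff_eq_simp}. Dividing by $\mathcal{R}\mathcal{Z}T$ separates the temporal from the spatial dependence, and a second separation splits the radial and axial contributions, yielding the three ordinary differential equations $\frac{1}{r}(r\mathcal{R}')' = -\alpha^2\mathcal{R}$, $\mathcal{Z}'' = -\beta^2\mathcal{Z}$, and $T' = -D(\alpha^2+\beta^2)T$, with the separation constants coupled so that the temporal decay rate equals $D(\alpha^2+\beta^2)$.

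Next I would solve the two spatial eigenvalue problems. The radial equation is Bessel's equation of order zero with general solution $c_1 J_0(\alpha r)+c_2 Y_0(\alpha r)$; the finiteness requirement \eqref{eq:system_model:inner_bd_condition} forces $c_2=0$ because $Y_0$ is singular at the origin, leaving $\mathcal{R}(r)=J_0(\alpha r)$. Imposing the Robin condition \eqref{eq:system_model:radial_boundary} and using $J_0'=-J_1$ gives $D\alpha J_1(\alpha R)=h J_0(\alpha R)$, which upon setting $\gamma=\alpha R$ is exactly \eqref{eq:thm:radial_eigv}; its countably many positive roots $\gamma_n$ define $\alpha_n=\gamma_n/R$. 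The axial equation has solution $a\cos(\beta z)+b\sin(\beta z)$; the condition \eqref{eq:system_model:axial_boundary_down} at $z=0$ fixes the ratio $a:b=\beta:(h/D)$, producing the eigenfunction $\beta\cos(\beta z)+(h/D)\sin(\beta z)$ that appears in \eqref{eq:thm:conc}, while the remaining condition \eqref{eq:system_model:axial_boundary_up} at $z=Z$ yields \eqref{eq:thm:axial_eigv}, whose positive roots are the $\beta_m$. A short sign check on the associated quadratic form confirms that all eigenvalues are strictly positive, so no zero or negative modes are missed.

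With the eigenfunctions in hand I would form the superposition $c=\sum_{n,m}A_{nm}J_0(\alpha_n r)[\beta_m\cos(\beta_m z)+(h/D)\sin(\beta_m z)]\exp(-D(\alpha_n^2+\beta_m^2)t)$ and impose the initial condition \eqref{eq:system_model:initial_condition}. By Sturm--Liouville theory the radial family $\{J_0(\alpha_n r)\}$ is orthogonal on $[0,R]$ with weight $r$, the axial family is orthogonal on $[0,Z]$ with unit weight, and both families are complete, so projecting the constant $c_0$ onto each product eigenfunction isolates a single coefficient. This expresses $A_{nm}$ as the product of the two source integrals $\int_0^R r J_0(\alpha_n r)\,dr$ and $\int_0^Z[\beta_m\cos(\beta_m z)+(h/D)\sin(\beta_m z)]\,dz$ divided by the product of the squared norms $N(\alpha_n)$ and $M(\beta_m)$. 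The first source integral evaluates to $\frac{R}{\alpha_n}J_1(\alpha_n R)$ by the standard Bessel identity and the second to the bracketed expression in \eqref{eq:thm:coefficients} by elementary integration, reproducing the stated numerator.

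The main obstacle is reducing the two normalization integrals to the closed forms \eqref{eq:thm:radial_norm} and \eqref{eq:thm:axial_norm}. For $N(\alpha_n)=\int_0^R r J_0^2(\alpha_n r)\,dr$ I would invoke the Lommel integral $\int_0^R r J_0^2(\alpha r)\,dr=\frac{R^2}{2}[J_0^2(\alpha R)+J_1^2(\alpha R)]$ and eliminate $J_1(\alpha_n R)$ via the eigenvalue relation $\alpha_n J_1(\alpha_n R)=(h/D)J_0(\alpha_n R)$ to obtain \eqref{eq:thm:radial_norm}. For $M(\beta_m)=\int_0^Z[\beta_m\cos(\beta_m z)+(h/D)\sin(\beta_m z)]^2\,dz$ the integration is elementary, but the resulting mixture of $\sin(2\beta_m Z)$, $\cos(2\beta_m Z)$, and $\sin^2(\beta_m Z)$ terms must be collapsed using the transcendental relation \eqref{eq:thm:axial_eigv}; this algebraic simplification, rather than any conceptual difficulty, is where care is needed to reach \eqref{eq:thm:axial_norm}. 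Convergence of the double series and its satisfaction of the governing equation and all boundary conditions then follow from the exponential decay of the temporal factors together with the uniform boundedness of the eigenfunctions.
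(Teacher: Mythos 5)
Your proposal is correct and follows essentially the same route as the paper's own (much terser) proof: separation of variables, elimination of $Y_0$ via \eqref{eq:system_model:inner_bd_condition}, eigenvalue equations from the Robin boundary conditions, and coefficients from projecting the uniform initial condition onto the orthogonal basis, with the Lommel integral and the transcendental relation used to reduce the norms to \eqref{eq:thm:radial_norm} and \eqref{eq:thm:axial_norm}. Note that the temporal factor $\exp(-D(\alpha_n^2+\beta_m^2)t)$ you derive is the correct one (consistent with \eqref{eq:analysis:total_flux}); the exponent $(\alpha_n+\beta_m)^2$ in \eqref{eq:thm:conc} is a typo in the theorem statement.
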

\begin{proof}
The proof of the theorem is found in the appendix.
\end{proof}

From Theorem~\ref{thm:concentration}, the normalized cumulative drug release over the entire surface of the coated \ac{BNC} fleece, $F(t)$, is readily derived in terms of the corresponding fluxes over the boundaries $r=R$, $z=0$, and $z=Z$ as
\begin{align}
    F(t) &= h \left[\int\limits_{0}^{t} \int\limits_{0}^{R} 2 \pi r \left(c(r,Z,\tau)+c(r,0,\tau)\right) \,\mathrm{d}r\mathrm{d}\tau\right.\nonumber\\
    &\quad {}+\left.\int\limits_{0}^{t} \int\limits_{0}^{Z} 2 \pi R\,c(R,z,\tau) \,\mathrm{d}z\mathrm{d}\tau\right],\label{eq:analysis:total_flux_int}
\end{align}
where the definitions of the respective fluxes follow from \eqref{eq:system_model:radial_boundary}-\eqref{eq:system_model:axial_boundary_up}.
After computing the integrals on the right-hand side of \eqref{eq:analysis:total_flux_int} analytically, the simplified expression \eqref{eq:analysis:total_flux}, displayed on top of the next page, results.
\begin{figure*}
\centering
\begin{align}
    F(t) &= \frac{2h}{ZR^2} \sum_{n=1}^\infty \sum_{m=1}^\infty A_{nm} \left[\frac{R}{\alpha_n} J_1(\alpha_n R) \left(\beta_m \cos(\beta_m Z)+ (h/D) \sin(\beta_m Z)+\beta_m\right)\right. \nonumber\\
    &\quad{}+ \left.R\,J_0(\alpha_n R)\left[\frac{h}{D \beta_m} + \sin(\beta_m Z) - \frac{h}{D \beta_m}\cos(\beta_m Z)\right]\right]\frac{1}{D(\alpha_n^2+\beta_m^2)}\left[1-\exp(-D(\alpha_n^2+\beta_m^2)t)\right] \label{eq:analysis:total_flux}
\end{align}
\begin{tabular}{p{.975\linewidth}}
\\\hline\\[-2.0em]
\end{tabular}
\end{figure*}

There exist two computational challenges when evaluating \eqref{eq:analysis:total_flux} numerically; (i) the computation of the infinite sum and (ii) the computation of the eigenvalues defined by implicit expressions \eqref{eq:thm:radial_eigv} and \eqref{eq:thm:axial_eigv}.
To overcome (i), the infinite sums are truncated to finite sums in the numerical evaluation (in this paper, we consider the first 250 terms in each sum).
In order to make the truncation error as small as possible, the terms with the {\em smallest} eigenvalues $\alpha_n$, $\beta_m$ are considered for each sum, since these contribute most to the total value of $F(t)$ as $t$ increases.
With respect to (ii), we note that finding the solutions to \eqref{eq:thm:radial_eigv} is simplified by the fact that $J_0(\cdot)$ and $J_1(\cdot)$ oscillate with the same frequency and exactly one solution exists for each oscillation.
Furthermore, the left-hand side in \eqref{eq:thm:axial_eigv} is periodic and monotonically increasing from $-\infty$ to $\infty$ in each interval $[(k-1/2)\pi/Z, (k+1/2)\pi/Z]$, $k\in\mathbb{Z}$, where $\mathbb{Z}$ denotes the set of integers, while the right-hand side in \eqref{eq:thm:axial_eigv} is monotonically decreasing for all $\beta>0$, except at the pole $\beta=h$.
Hence, for each period of $\tan(\beta Z)$ there exists exactly one solution to \eqref{eq:thm:axial_eigv} except for the interval containing the pole $\beta=h$ in which there exists one additional solution.
Finally, the eigenvalues $\alpha_n$ and $\beta_m$ can be computed offline and then be re-used in each evaluation of \eqref{eq:analysis:total_flux} for different values of $t$.
In summary, the computational effort for evaluating \eqref{eq:analysis:total_flux} is minor.

\section{Evaluation}\label{sec:evaluation}
Figure~\ref{fig:drug_release} shows $F(t)$, i.e., the cumulative fraction of drug released from the \ac{BNC} fleece at time $t$ relative to the total amount of drug initially loaded into the \ac{BNC}. We show the experimental release profiles (markers) and the analytical predictions (solid lines) for three different scenarios:
\begin{itemize}
    \item \textbf{Green:} Drug release from uncoated BNC fleeces.
    \item \textbf{Blue:} Drug release from BNC fleeces coated for 20 minutes.
    \item \textbf{Orange:} Drug release from BNC fleeces coated for 30 minutes.
\end{itemize}

In addition, simulation results are provided for the cases that the diffusion coefficient $D$ is increased or decreased by $20\%$, respectively, for the scenario with 20 minute coating duration.
For all experimental results reported in Figure~\ref{fig:drug_release}, the replicate count is $n=3$.

\subsection{Parameter Value Setting}\label{sec:parameter_values}
The fitting procedure proceeds in two stages.
First, the fleece radius $R$ and height $Z$ are fixed to experimentally measured values, namely $R=7.6\times10^{-3}$ $\mathrm{m}$ and $Z=4.4\times10^{-3}$ $\mathrm{m}$.
Then, the diffusion coefficient $D$ is determined as $D=2.3649\times10^{-10}$ $\textrm{m}^2/\textrm{s}$ from the uncoated dataset by setting the coating thickness $l$ to a very small value in the model, so that the boundary conditions approach Dirichlet-type behavior (no diffusive resistance at the fleece boundaries).
The obtained value of $D$ is in line with previous studies of diffusion-controlled drug release from a \ac{BNC} fleece \cite{lotter_microparticle-based_2023}.

Second, $l$ is set to $l_{20} = 1.25\times10^{-4}$ $\mathrm{m}$, as measured experimentally from images of 20~min coated samples, and the effective diffusion coefficient in the coating, $D_\mathrm{c}$, is fitted for the 20~min coating dataset as $D_\mathrm{c}=3.3417\times10^{-11}$ $\textrm{m}^2/\textrm{s}$. Finally, using $D_\mathrm{c}=3.3417\times10^{-11}$ $\textrm{m}^2/\textrm{s}$, the coating thickness $l_{30} = 3.14\times10^{-4}$ $\mathrm{m}$ is fitted for the 30~min dataset. All measured and fitted parameter values are summarized in Table~\ref{table:summary_values}.

\begin{table}[!t]
    \centering
    \caption{Measured and fitted parameter values.}\label{table:summary_values}
    \def\arraystretch{1.4}
    \resizebox{\columnwidth}{!}{
    \begin{tabular}{|l|c|c|c|c|}
        \hline
        Parameter & Variable & Value & Measured & Fitted \\
        \hline\hline

        Fleece radius  &$R$ & $7.6\times10^{-3}$ $\mathrm{m}$ & \checkmark &  \\ \hline
        Fleece height & $Z$ & $4.4\times10^{-3}$ $\mathrm{m}$ & \checkmark &  \\ \hline

        Diffusion coefficient of \ac{BNC} & $D$ & $2.3649\times10^{-10}$ $\textrm{m}^2/\textrm{s}$ &  & \checkmark \\ \hline

        Coating times & -- & 20 min, 30 min & \checkmark &  \\ \hline

        Coating thickness after 20 min & $l_{20}$ &
            $1.25\times10^{-4}$ $\mathrm{m}$ & \checkmark & \\ \hline

        Coating thickness after 30 min & $l_{30}$ &
            $3.14\times10^{-4}$ $\mathrm{m}$ &  & \checkmark \\ \hline
            
        Coating diffusion coefficient & $D_\mathrm{c}$ & $3.3417\times10^{-11}$ $\textrm{m}^2/\textrm{s}$ &  & \checkmark \\ \hline
    \end{tabular}
    }
\end{table}

\subsection{Results and Discussion}\label{sec:results_and_discussion}
The following observations can be made:
\begin{enumerate}
    \item The analytical model reproduces the experimental results with excellent accuracy across all scenarios (mean squared errors: $2.0235\times10^{-4}$ for no coating, $1.2938\times 10^{-4}$ for 20 min coating, $2.3411\times 10^{-4}$ for 30 min coating), confirming that the simplifying assumptions of uniform initial drug distribution and thin-layer coating approximation are adequate for capturing the main release dynamics.
    \item The polymer coating markedly slows the release kinetics, resulting in an extended drug release compared to the uncoated case. This behavior is consistent with the additional diffusive resistance imposed by the coating, as modeled through the boundary conditions \eqref{eq:system_model:radial_boundary}-\eqref{eq:system_model:axial_boundary_up}.
    \item Increasing the coating time from 20~min to 30~min further extends the release, consistent with the thicker coating shell produced by longer polymer deposition.
    \item The fitted $l$ for the 30~min coating is larger than for the 20~min coating, in agreement with the expectation that increased coating thickness leads to elongated diffusion pathways.
    \item Increasing/decreasing $D$ by $20\%$ leads to only minor changes in the simulation results, indicating relatively low sensitivity of the model with respect to the drug mobility inside the fleece.
\end{enumerate}

Beyond these direct observations, several broader implications emerge from the results.
First, the ability of the model to accurately match data for both coated and uncoated samples without re-fitting geometry or bulk diffusivity parameters underscores its predictive potential.
Once the intrinsic BNC diffusivity $D$ has been determined from one reference measurement, the model can be used to predict release profiles for arbitrary coating permeabilities.
This could significantly reduce experimental workload when screening different coating protocols.

Second, the strong correlation between coating time and the extracted $l$ parameter provides a quantitative link between an easily controllable manufacturing variable and the drug retention within the coating.
This relationship can be exploited in formulation design: for example, choosing a coating protocol to achieve a desired release duration without changing the core BNC properties.

Third, the observed agreement between theory and experiment suggests that additional complexities, such as swelling of the BNC matrix or drug–polymer interactions, either have negligible effects in the studied time frame or can be effectively incorporated into an ``effective'' diffusivity parameter.
Nevertheless, deviations might become significant for highly hydrophobic drugs or in media that alter coating integrity, which points toward natural extensions of the current model.

Finally, the extended release observed in the coated samples demonstrates the suitability of polymer-coated BNC fleeces for gastroretentive drug delivery applications where long gastric residence and controlled release are critical, such as for narrow-absorption-window drugs or local gastric treatments.

\begin{figure}
    \centering
    \includegraphics[width=0.99\linewidth]{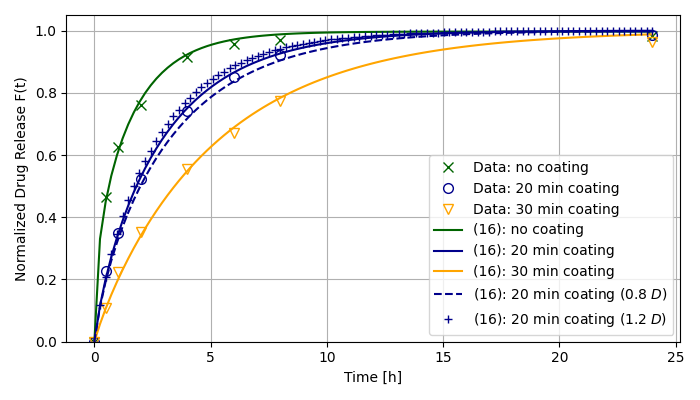}
    \caption{Data from drug release experiments and analytical model \eqref{eq:analysis:total_flux}.}\label{fig:drug_release}
\end{figure}

\section{Conclusion}\label{sec:conclusion}
We have presented a novel analytical model for drug release from polymer-coated BNC fleeces, derived an explicit analytical solution to the underlying boundary value problem, and validated the model with experimental data.
The results show that the model can accurately capture the influence of coating thickness and permeability on drug release kinetics.

Beyond explaining the observed experimental trends, the proposed framework provides a computationally efficient tool for predicting release profiles for different BNC geometries and coating properties.
This capability can inform the rational design and optimization of future BNC-based gastroretentive drug delivery systems, potentially accelerating the translation of such systems from laboratory concepts to clinically viable therapeutics.

\appendices
\section{Proof of Theorem~\ref{thm:concentration}}
We seek a separated solution of
\begin{align}
\frac{1}{D}\frac{\partial c}{\partial t} = \frac{1}{r}\frac{\partial}{\partial r}\left(r \frac{\partial c}{\partial r}\right) +  \frac{\partial^2 c}{\partial z^2},\quad c(r,z,0)=c_0,
\end{align}

on \([0,R]\times[0,Z]\) with the Robin conditions stated in \eqref{eq:system_model:radial_boundary}--\eqref{eq:system_model:inner_bd_condition}.

Let
\begin{align}
c(r,z,t)=\mathcal{T}(t)\mathcal{R}(r)\mathcal{Z}(z).
\end{align}
Substituting and separating gives
\begin{align}
&\frac{\mathcal{T}'} {D\mathcal{T}}= -(\alpha^2+\beta^2),\,\,
\frac1r(r\mathcal{R}')'+\alpha^2\mathcal{R}=0,\,\,\nonumber\\ &\mathcal{Z}''+\beta^2 \mathcal{Z}=0,
\end{align}
where $f'$ and $f''$ denote the first and second order derivative of function $f$.
Hence,
\begin{align}
\mathcal{T}(t)=e^{-D(\alpha^2+\beta^2)t}.
\end{align}

\paragraph{Radial eigenproblem}
The regular solution of \(\frac1r(r\mathcal{R}')'+\alpha^2\mathcal{R}=0\) is
\begin{align}
\mathcal{R}(r)=J_0(\alpha r).
\end{align}
The boundary condition \(-D\mathcal{R}'(R)=h\mathcal{R}(R)\) yields
\begin{align}
\gamma J_1(\gamma)=\frac{hR}{D} J_0(\gamma),\qquad \gamma=\alpha R,
\end{align}
whose positive roots are \(\gamma_n\). Thus
\begin{align}
\alpha_n=\gamma_n/R,\qquad \mathcal{R}_n(r)=J_0(\alpha_n r).
\end{align}

\paragraph{Axial eigenproblem}
Solving \(\mathcal{Z}''+\beta^2 \mathcal{Z}=0\) with
\begin{align}
D\mathcal{Z}'(0)=h\mathcal{Z}(0),\qquad -D\mathcal{Z}'(Z)=h\mathcal{Z}(Z),
\end{align}
leads to the eigenvalue equation
\begin{align}\label{eq:app:eig_Z}
\tan(\beta Z)=\frac{2h\beta}{D(\beta^2-h^2/D^2)}.
\end{align}
Let \(\beta_m\) be the positive roots of \eqref{eq:app:eig_Z}, and define the corresponding eigenfunction
\begin{align}
\mathcal{Z}_m(z)=\beta_m\cos(\beta_m z)+\frac{h}{D}\sin(\beta_m z).
\end{align}

\paragraph{Mode expansion}
The separated modes are
\begin{align}
\Phi_{nm}(r,z,t)
=J_0(\alpha_n r)\,\mathcal{Z}_m(z)\,e^{-D(\alpha_n^2+\beta_m^2)t}.
\end{align}
Hence, we have
\begin{align}
c(r,z,t)=\sum_{n,m}A_{nm}\Phi_{nm}(r,z,t),
\end{align}
with coefficients $A_{nm}$.
\paragraph{Projection}
With inner product
\begin{align}
\langle f,g\rangle=\int_0^Z\int_0^R f g\,2\pi r\,\mathrm{d}r\,\mathrm{d}z,
\end{align}
the coefficients are
\begin{align}
A_{nm} = \frac{\mathcal I_{nm}}{2\pi N(\alpha_n)M(\beta_m)}\,,
\end{align}
where $\mathcal I_{nm} = \langle c_0, J_0(\alpha_n r)\mathcal{Z}_m(z)\rangle$ and $2\pi N(\alpha_n)M(\beta_m) = \langle J_0(\alpha_n r)\mathcal{Z}_m(z),J_0(\alpha_n r)\mathcal{Z}_m(z)\rangle$.

Because \(c_0=\tfrac{1}{ZR^2\pi}\),
\begin{align}
\int_0^R c_0 J_0(\alpha_n r)\,2\pi r\,\mathrm{d}r
=\frac{2}{ZR\alpha_n}J_1(\alpha_n R),
\end{align}
and
\begin{align}
\int_0^Z \mathcal{Z}_m(z)\,\mathrm{d}z
=\sin(\beta_m Z)+\frac{h}{D\beta_m}(1-\cos(\beta_m Z)).
\end{align}
Thus,
\begin{align}
\mathcal I_{nm}
=&\frac{2}{ZR\alpha_n}J_1(\alpha_n R)\nonumber\\
&\quad \Big[\sin(\beta_m Z)+\frac{h}{D\beta_m}(1-\cos(\beta_m Z))\Big].
\end{align}

\paragraph{Norms}
Using standard Bessel identities and the Robin eigenvalue relation,
\begin{align}
N(\alpha_n)=\frac{J_0^2(\alpha_n R)}{2} R^2
\frac{(h/D)^2+\alpha_n^2}{\alpha_n^2}.
\end{align}
The direct evaluation of \(\int_0^Z \mathcal{Z}_m^2(z) \mathrm{d}z\) gives
\begin{align}
M(\beta_m)&=\frac12\Big[ \left(\beta_m^2+(h/D)^2\right)
\Big(Z+\frac{h/D}{\beta_m^2+(h/D)^2}\Big) \nonumber\\
&\quad +\frac{h}{D} \Big].
\end{align}

\paragraph{Final form}
Combining all expressions and simplifying yields exactly the coefficient formula \eqref{eq:thm:coefficients} and definitions \eqref{eq:thm:radial_eigv}, \eqref{eq:thm:axial_eigv}, \eqref{eq:thm:radial_norm}, \eqref{eq:thm:axial_norm}. This completes the proof.
\bibliographystyle{IEEEtran}
\bibliography{bibtex/refs}

\end{document}